\pgfplotsset{width=10\columnwidth /10, compat = 1.13, 
	height = 55\columnwidth /100, grid= major, 
	legend cell align = left, ticklabel style = {font=\scriptsize},
	every axis label/.append style={font=\scriptsize},
	legend style = {font=\tiny},title style={yshift=-7pt, font = \small} }
\newcommand{\changed}[1]{{\color{black} #1}}
\newtheorem{assumption}{Assumption}
\newtheorem{definition}{Definition}
\newtheorem{lemma}{Lemma}
\newtheorem{proposition}{Proposition}
\newtheorem{theorem}{Theorem}
\newtheorem{remark}{Remark}
\DeclareMathOperator*{\argmin}{arg\,min}
\DeclareMathOperator*{\argmax}{arg\,max}
\title{\LARGE \bf
Risk-Sensitive Inhibitory Control for Safe Reinforcement Learning
}
\author{Armin Lederer$^{1}$, Erfaun Noorani$^{2}$, John S. Baras$^{2}$, Sandra Hirche$^{1}$
\thanks{$^{1}$A. Lederer and S. Hirche are with the Chair of Information-oriented Control (ITR), School of Computation, Information and Technology, Technical University of Munich, 80333 Munich, Germany. Emails: {\tt\small \{armin.lederer, hirche\}@tum.de}.}%
\thanks{$^{2}$E. Noorani and J. Baras are with the Department of Electrical and Computer Engineering and the Institute for Systems Research (ISR) at the University of Maryland, College Park, MD, USA. Emails: {\tt\small \{enoorani,baras\}@umd.edu}.}%
\thanks{Research partially supported by ONR grant N00014-17-1-2622, by a grant from the Army Research Lab, by the Clark Foundation, and by the European Research Council (ERC) Consolidator Grant 
"Safe data-driven control for human-centric systems (CO-MAN)" under grant 
agreement number 864686.}
}
\begin{document}

\maketitle
\thispagestyle{empty}
\pagestyle{empty}

\begin{abstract}
Humans have the ability to deviate from their natural behavior when necessary, which is a cognitive process called response inhibition. 
Similar approaches have independently received increasing attention in recent years for ensuring the safety of control. Realized using control barrier functions or predictive safety filters, these approaches can effectively ensure the satisfaction of state constraints through an online adaptation of nominal control laws, e.g., obtained through reinforcement learning. While the focus of these realizations of inhibitory control has been on risk-neutral formulations, human studies have shown a tight link between response inhibition and risk attitude. Inspired by this insight, we propose a flexible, risk-sensitive method for inhibitory control. Our method is based on a risk-aware condition for value functions, which guarantees the satisfaction of state constraints. We propose a method for learning these value functions using common techniques from reinforcement learning and derive sufficient conditions for its success. By enforcing the derived safety conditions online using the learned value function, risk-sensitive inhibitory control is effectively achieved. 
The effectiveness of the developed control scheme is demonstrated in simulations.
\end{abstract}

\section{Introduction}

\setlength{\textfloatsep}{8pt}
\setlength{\abovedisplayskip}{6pt}
\setlength{\belowdisplayskip}{6pt}

Having a pause before responding is a mental technique that helps humans perceive, control, and manage our emotions. Human's ability to think before reacting, especially in difficult and complex situations, is a cognitive mechanism to keep our actions in check. This cognitive process is called inhibitory control, also known as response inhibition~\cite{Nigg2000}. Response inhibition allows an individual to inhibit their prepotent (natural and habitual) responses in order to select a more appropriate (e.g. safer) behavior. 

Independent from this foundation in psychology, response inhibition has become increasingly popular in learning-based control \cite{Brunke2021} and Reinforcement Learning (RL) \cite{Sutton2017} in recent years, where safety is a major concern \cite{Dulac-Arnold2019}. The idea is to decouple optimality and safety by independently determining safe and optimal control laws. Before applying an optimal, but potentially unsafe control input to the real system, its safety is checked, such that a safe control input can be chosen instead \cite{Alshiekh2018}. Thereby, the prepotent optimal response is inhibited to guarantee the safety of the closed-loop system.

The challenge of this approach lies in finding safe policies and efficient methods to determine the safety of a control input online. When the dynamics of the systems are known to exhibit a control-affine structure, control barrier functions (CBF) can be effectively employed to address this challenge~\cite{Taylor2019}. Since their analytical derivation for more flexible classes of dynamical systems is difficult at best, techniques from model predictive control have become popular for computing safe backup strategies online~\cite{Bastani2021, Wabersich2021b}. While such predictive safety filters provide a conceptionally flexible approach for realizing inhibitory control, they generally suffer from high computational complexity. 
This limitation can be mitigated by combining ideas from reachability analysis~\cite{Hsu2021} or optimal control~\cite{Curi2022} with reinforcement learning techniques to learn safety conditions and safe control laws offline, such that resource-demanding computations can be avoided during the application of the inhibited control law.

While these approaches allow the seemingly straightforward realization of inhibitory control for ensuring the safety of real-world systems, they do not consider the risk of losing safety 
due to uncertainty arising from approximate system models and process noise. This is in strong contrast to humans, for which psychological studies have shown a critical link between response inhibition and an individual’s risk attitude (willingness to take risk or not)~\cite{sherman2018connecting}. When inhibitory control is implemented in technical systems through analytically derived safety conditions such as CBFs, this risk-sensitivity can be easily achieved by reformulating standard conditions using risk measures~\cite{Ahmadi2022}. 
However, the extension to flexible approaches for constructing safety conditions, e.g., using RL techniques remains an open problem.\looseness=-1

We address this problem of realizing inhibitory control with risk-awareness similar to humans for ensuring the safety of a wide class of systems via the following contributions:\looseness=-1
\begin{itemize}
    \item \textbf{Risk-sensitive safety conditions:} 
    To ensure the probabilistic satisfaction of state constraints, we introduce cost functions allowing us to express safety via risk-sensitive conditions on the cumulative cost along system trajectories. These conditions reveal an intuitive relationship between risk-aversion and safety probability.
    \item \textbf{Safe policies and value functions through RL:} Based on these results, we develop an approach for determining safe policies and corresponding safety value functions using common techniques from reinforcement learning. The success of the proposed approach is shown to be guaranteed under weak assumptions relating to the controllability properties of the system dynamics.
    \item \textbf{Inhibitory control through safety filters:} By enforcing the satisfaction of the derived safety conditions with the learned value function online, we obtain a risk-sensitive safety filter. 
    Moreover, we prove it to inherit probabilistic safety guarantees from the safe policy obtained through RL.\looseness=-1
\end{itemize}

The remainder of this paper is structured as follows. In \cref{sec:problem}, the problem of rendering a given policy safe with respect to state constraints using safety filters is formalized. Our approach for realizing response inhibition in control using risk-sensitive safety filters is derived in  \cref{sec:inhibitory_control}. In \cref{sec:num_eval}, the effectiveness of the proposed safety filter is demonstrated, before the paper is concluded in \cref{sec:conclusion}.

\section{Problem Statement}\label{sec:problem}
We consider a discrete-time dynamical system\footnote{Notation: 
Lower/upper case bold symbols denote vectors/matrices, blackboard bold letters 
denote sets, 
$\mathbb{R}_+$/$\mathbb{R}_{0,+}$ all real positive/non-negative 
numbers, 
$\|\cdot\|$ the Euclidean norm, $\mathbb{E}_{x}[\cdot]$ the expectation with respect to the distribution of $x$, and $\mathbb{P}(\cdot)$ the probability.}
\begin{align}\label{eq:true_sys}
    \bm{x}_{k+1}=\bm{f}(\bm{x}_k,\bm{u}_k,\bm{\omega}_k),
\end{align}
where $\bm{x}_k\in\mathbb{X}\subset\mathbb{R}^{d_x}$ are states, $\bm{u}_k\in\mathbb{U}\subset\mathbb{R}^{d_u}$ are control inputs, $\bm{\omega}_k\in\Omega\subset\mathbb{R}^{d_\omega}$, \changed{$\bm{\omega}_k\sim\rho(\bm{x}_k)$ is independent process noise drawn from a potentially state-dependent distribution $\rho(\bm{x}_k)$ with zero mean,} and
$\bm{f}:\mathbb{X}\times\mathbb{U}\times\Omega\rightarrow\mathbb{X}$ denotes an unknown, continuous transition function. We assume that a nominal, potentially unsafe policy $\bm{\pi}^*:\mathbb{X}\rightarrow\mathbb{U}$ is given, which can be obtained, e.g., using standard reinforcement learning techniques \cite{Sutton2017}. 

The goal is to render the nominal policy safe using inhibitory control of the form
\begin{subequations}
    \begin{align}
    \bm{\pi}_{\mathrm{safe}}^*(\bm{x})=&\argmin\limits_{\bm{u}\in\mathbb{U}} \|\bm{\pi}^*(\bm{x})-\bm{u}\|\\
    &\text{such that $\bm{u}$ is safe}.\label{eq:safety_cond_problem}
\end{align}
\end{subequations}

In this response inhibition, our notion of safety follows the common principle of classifying the state space $\mathbb{X}$ into a safe region $\mathbb{X}_{\mathrm{safe}}\subset\mathbb{X}$ and an unsafe region $\mathbb{X}_{\mathrm{unsafe}}=\mathbb{X}\setminus\mathbb{X}_{\mathrm{safe}}$. For example, the safe set $\mathbb{X}_{\mathrm{safe}}$ can represent the joint angles for which self-collisions of a robotic manipulator are excluded. Due to the process noise $\bm{\omega}$ with a potentially unbounded probability distribution, it is generally not possible to deterministically ensure that the system never enters the unsafe state space $\mathbb{X}_{\mathrm{unsafe}}$. Therefore, we define safety probabilistically through the following form of forward invariance.
\begin{definition}\label{def:safety}
A policy $\bm{\pi}(\cdot)$ is called $\delta$-safe if there exists a subset $\mathbb{V}\subseteq\mathbb{X}_{\mathrm{safe}}$ such that $\mathcal{P}(\bm{f}(\bm{x},\bm{\pi}(\bm{x}),\bm{\omega})\in\mathbb{V})\geq 1-\delta$ for all $\bm{x}\in\mathbb{V}$.
\end{definition}
Since \cref{def:safety}  requires a form of forward invariance of $\mathbb{V}$, it immediately induces guarantees for all states along a $K$-step trajectories of the form 
\begin{align}
    \mathcal{P}(\bm{x}_k\in\mathbb{V},~ \forall k=1\ldots,K)\geq (1-\delta)^K,
\end{align}
where $\bm{x}_k$ is defined through iterative application of \eqref{eq:true_sys}. Hence, the considered notion of safety in this paper is stronger than merely requiring the next state to lie in the safe \changed{subset}, i.e., $\mathcal{P}(\bm{f}(\bm{x},\bm{\pi}(\bm{x}),\bm{\omega})\in\mathbb{X}_{\mathrm{safe}})\geq 1-\delta$.

Based on the definition of $\delta$-safety, we consider the problem of deriving a tractable safety condition \eqref{eq:safety_cond_problem} for inhibitory control, which is guaranteed to be feasible for some risk-aversion as measured through $\delta$. 
Since we assume the transition function $\bm{f}$ is unknown, solving this problem is generally impossible without any further assumptions. Therefore, we require the availability of a probabilistic model in the form of a distribution over functions as formalized in the following.\looseness=-1
\begin{assumption}\label{ass:prob_model}
A probability distribution $\mathcal{F}$ over potential dynamics $\bm{f}$ is known, i.e., $\bm{f}\sim\mathcal{F}$.
\end{assumption}
In practice, suitable distributions over functions $\mathcal{F}$ can be straightforwardly obtained using Bayes' theorem, e.g., through 
Gaussian process regression \cite{Rasmussen2006}. Moreover, approximate distributions can be learned using 
deep ensembles~\cite{Lakshminarayanan2017}. Therefore, this assumption is not restrictive in practice.

\section{Risk-Sensitive Inhibitory Control}\label{sec:inhibitory_control}

Even with the knowledge of $\mathcal{F}$, determining a safety condition \eqref{eq:safety_cond_problem} is a challenging problem since we 
generally do not know which subset $\mathbb{V}$ is suitable for \cref{def:safety}. Here, we follow the ideas of~\cite{Curi2022} and employ RL techniques to define these subsets through a value function. For this purpose, we first show how state constraints can be expressed through risk-sensitive cost conditions in \cref{subsec:cons2cost}.
After deriving these safety conditions, in \cref{subsec:safe_backup}, we address the problem of learning a separate, so-called backup policy whose pure focus lies on ensuring safety. Based on this policy, a risk-sensitive safety filter for realizing inhibitory control in reinforcement learning is finally presented in \cref{subsec:safety_filt}.\looseness=-1

\subsection{State Constraints as Risk-Sensitive Cost Conditions}\label{subsec:cons2cost}

In order to express state constraints through risk-sensitive cost conditions, we define the expected
cumulative cost for a policy $\bm{\pi}(\cdot)$  as
\begin{align}\label{eq:cum_cost}
    V_{\bm{\pi}}(\bm{x}) =\mathbb{E}_{\bm{f},\bm{\omega}}\left[\sum\limits_{k=0}^{\infty}  \gamma^k c(\bm{x}_k)\right],
\end{align}
where $c:\mathbb{R}^{d_x}\rightarrow\mathbb{R}_{0,+}$ denotes an immediate cost, \changed{$\gamma\in(0,1)$ is a discount factor,} and $\bm{x}_k$ is defined through the iterative application of \eqref{eq:true_sys} with $\bm{x}_0=\bm{x}$ and $\bm{u}_k=\bm{\pi}(\bm{x}_k)$. If the immediate cost $c(\cdot)$ can be used as an indicator of the unsafe subset $\mathbb{X}_{\mathrm{unsafe}}$, there exists a sub-level set of $V_{\bm{\pi}}(\cdot)$ contained in $\mathbb{X}_{\mathrm{safe}}$, as guaranteed by the following lemma.
\begin{lemma}[\cite{Curi2022}]\label{lem:set2cost}
Assume there exists a constant $\hat{c}\in\mathbb{R}_+$, such that the cost $c:\mathbb{R}^{d_x}\rightarrow\mathbb{R}_{0,+}$ satisfies
\begin{align}\label{eq:c_cond}
    c(\bm{x})\geq \hat{c}\quad \forall\bm{x}\in\mathbb{X}_{\mathrm{unsafe}}.
\end{align} 
Then, there exists a constant $\bar{\xi}\in\mathbb{R}_+$, such that the intersection between the sub-level set $\mathbb{V}_{\bm{\pi}}^{\bar{\xi}}=\{\bm{x}\in\mathbb{X}: V_{\bm{\pi}}(\bm{x})\leq\bar{\xi}\}$ and $\mathbb{X}_{\mathrm{unssafe}}$ is empty, i.e., $\mathbb{V}_{\bm{\pi}}^{\bar{\xi}}\cap\mathbb{X}_{\mathrm{unsafe}}=\emptyset$. 
\end{lemma}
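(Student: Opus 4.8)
The plan is to exploit the non-negativity of the immediate cost $c(\cdot)$ together with the observation that the zeroth term of the discounted sum in \eqref{eq:cum_cost} is deterministic. First, I would note that since $c(\bm{x})\in\mathbb{R}_{0,+}$ for all $\bm{x}$ and $\gamma\in(0,1)$, every summand $\gamma^k c(\bm{x}_k)$ is non-negative, so the value function is bounded below by its zeroth term. Because the trajectory is initialized at the deterministic state $\bm{x}_0=\bm{x}$, which depends on neither $\bm{f}$ nor $\bm{\omega}$, the expectation of this term equals $c(\bm{x})$ itself. Hence $V_{\bm{\pi}}(\bm{x})\geq c(\bm{x})$ for every $\bm{x}\in\mathbb{X}$.

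Next, I would restrict attention to the unsafe region. For any $\bm{x}\in\mathbb{X}_{\mathrm{unsafe}}$, the assumption \eqref{eq:c_cond} gives $c(\bm{x})\geq\hat{c}$, and combining this with the lower bound above yields $V_{\bm{\pi}}(\bm{x})\geq\hat{c}$ for all unsafe states. Choosing any threshold $\bar{\xi}\in(0,\hat{c})$---for concreteness $\bar{\xi}=\hat{c}/2$, which lies in $\mathbb{R}_+$ since $\hat{c}\in\mathbb{R}_+$---then guarantees that no unsafe state satisfies $V_{\bm{\pi}}(\bm{x})\leq\bar{\xi}$, because such states have value at least $\hat{c}>\bar{\xi}$. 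Consequently $\mathbb{V}_{\bm{\pi}}^{\bar{\xi}}\cap\mathbb{X}_{\mathrm{unsafe}}=\emptyset$, as claimed.

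There is little genuine difficulty here; the statement is essentially a consequence of the fact that an unsafe starting state already incurs cost $\hat{c}$ in the very first step. The one point requiring care is the handling of the expectation: I would make explicit that $\mathbb{E}_{\bm{f},\bm{\omega}}[c(\bm{x}_0)]=c(\bm{x})$ because the initial state is fixed, so that no integrability or finiteness assumption on the infinite-horizon sum is needed---if $V_{\bm{\pi}}(\bm{x})=+\infty$ for some unsafe $\bm{x}$, the inequality $V_{\bm{\pi}}(\bm{x})\geq\hat{c}$ still holds trivially. I would also remark that the lemma only asserts emptiness of the intersection and not non-emptiness of $\mathbb{V}_{\bm{\pi}}^{\bar{\xi}}$, so no lower bound on the size of the sub-level set is required.
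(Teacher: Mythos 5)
Your proof is correct: the bound $V_{\bm{\pi}}(\bm{x})\geq c(\bm{x})\geq\hat{c}$ on $\mathbb{X}_{\mathrm{unsafe}}$, followed by choosing any $\bar{\xi}<\hat{c}$, is exactly the standard argument, and your care with the deterministic initial state and the possibly infinite value function is appropriate. Note that the paper itself gives no proof---it imports the lemma from \cite{Curi2022}---and your reasoning coincides with the argument used in that reference, so there is nothing further to reconcile.
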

Based on this lemma, we can choose any sub-level set $\mathbb{V}_{\bm{\pi}}^{\xi}$ with $\xi\leq \bar{\xi}$ for showing $\delta$-safety as introduced in \cref{def:safety}. 
\changed{As discussed in \cite{Curi2022}, the immediate cost $c(\cdot)$ for defining sub-level sets $\mathbb{V}_{\bm{\pi}}^{\xi}$ can be selected relatively freely, such that simple choices as the indicator function are applicable in principle. However, this choice does not provide informative gradients, which complicates the learning process. Therefore, other cost functions such as rectified linear unit functions generally need to be considered, even though they can potentially lead to more conservative approximations of the safe set $\mathbb{X}_{\mathrm{safe}}$. 
}
\changed{To obtain suitable values for $\bar{\xi}$, different approaches can be used. For example, potentially conservative closed-form expressions can be employed as shown in \cite{Curi2022}. Moreover, optimal solutions can be found by formulating the search for $\bar{\xi}$ as a robust optimization problem, which can be solved numerically. }
Therefore, it only remains to derive conditions that ensure the state stays in $\mathbb{V}_{\bm{\pi}}^{\xi}$ after a transition. While this could be achieved using a probabilistic ''worst case'' consideration as shown in \cite{Curi2022}, this approach yields a computationally challenging min-max problem for unknown system dynamics. Therefore, we follow a fully probabilistic approach by introducing the risk operator \cite{286253}\looseness=-1
\begin{align}\label{eq:risk_op}
    \mathbb{R}_{\beta}[C]=\frac{1}{\beta} \log\left(\mathbb{E}\left[ \exp\left( \beta C \right)  \right]\right)
\end{align}
for an arbitrary random variable $C$ and risk parameter $\beta\in\mathbb{R}_+$. This operator allows the derivation of a computationally efficient condition for ensuring $\delta$-safety as shown in the following proposition.
\begin{proposition}\label{prop:safety}
Consider a cost function $c(\cdot)$ satisfying~\eqref{eq:c_cond}. If there exist constants $\xi,\beta\!\in\!\mathbb{R}_+$ with $\xi\!<\!\bar{\xi}$ such that\looseness=-1
\begin{align}\label{eq:safety_cond}
    \mathbb{R}_{\beta}[V_{\bm{\pi}}(\bm{x}^+)]\leq \xi, \qquad \forall\bm{x}\in\mathbb{V}_{\bm{\pi}}^{\bar{\xi}}
\end{align}
holds for 
$\bm{x}^+\!\!=\!\!\bm{f}(\bm{x},\bm{\pi}(\bm{x}),\bm{\omega})$, then, $\bm{\pi}(\cdot)$ is $\delta$-safe on $\mathbb{V}_{\bm{\pi}}^{\xi}$ with 
\begin{align}\label{eq:delta}
    \delta = \exp\left(\beta\left(\xi-\bar{\xi}\right)\right).
\end{align}
\end{proposition}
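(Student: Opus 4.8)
The plan is to unpack the risk operator \eqref{eq:risk_op} into a bound on the moment generating function of $V_{\bm\pi}(\bm x^+)$ and then convert this into a tail bound via Markov's inequality, with the slack $\bar\xi-\xi$ supplying exactly the exponential rate appearing in \eqref{eq:delta}. I would take the invariant witness set of \cref{def:safety} to be $\mathbb{V}_{\bm\pi}^{\bar\xi}$ rather than $\mathbb{V}_{\bm\pi}^{\xi}$: the hypothesis \eqref{eq:safety_cond} is assumed on all of $\mathbb{V}_{\bm\pi}^{\bar\xi}$, and the natural one-step tail bound is phrased with the larger radius $\bar\xi$. Since $\xi<\bar\xi$ gives the nesting $\mathbb{V}_{\bm\pi}^{\xi}\subseteq\mathbb{V}_{\bm\pi}^{\bar\xi}$, the smaller set serves as the admissible set of initial states, which is what ``$\delta$-safe on $\mathbb{V}_{\bm\pi}^{\xi}$'' refers to.

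First I would discharge the containment requirement of \cref{def:safety}: by \cref{lem:set2cost} we have $\mathbb{V}_{\bm\pi}^{\bar\xi}\cap\mathbb{X}_{\mathrm{unsafe}}=\emptyset$, hence $\mathbb{V}_{\bm\pi}^{\bar\xi}\subseteq\mathbb{X}_{\mathrm{safe}}$, and the same holds for the nested set $\mathbb{V}_{\bm\pi}^{\xi}$. Then I would fix an arbitrary $\bm x\in\mathbb{V}_{\bm\pi}^{\bar\xi}$ and write $\bm x^+=\bm f(\bm x,\bm\pi(\bm x),\bm\omega)$, whose law is induced jointly by $\bm f\sim\mathcal F$ and $\bm\omega\sim\rho(\bm x)$. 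Because $\beta>0$ and $\log$ is strictly increasing, the hypothesis $\mathbb{R}_\beta[V_{\bm\pi}(\bm x^+)]\leq\xi$ is, by the definition of the risk operator, equivalent to
\begin{align}
    \mathbb{E}\left[\exp\left(\beta V_{\bm\pi}(\bm x^+)\right)\right]\leq\exp(\beta\xi).
\end{align}

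Since $\exp(\beta V_{\bm\pi}(\bm x^+))$ is a nonnegative random variable, Markov's inequality applies at the threshold $\exp(\beta\bar\xi)$; using strict monotonicity of $\exp(\beta\,\cdot)$ to identify the event $\{V_{\bm\pi}(\bm x^+)\geq\bar\xi\}$ with $\{\exp(\beta V_{\bm\pi}(\bm x^+))\geq\exp(\beta\bar\xi)\}$, this yields
\begin{align}
    \mathcal{P}\!\left(V_{\bm\pi}(\bm x^+)\geq\bar\xi\right)\leq\frac{\mathbb{E}\!\left[\exp(\beta V_{\bm\pi}(\bm x^+))\right]}{\exp(\beta\bar\xi)}\leq\exp\!\big(\beta(\xi-\bar\xi)\big)=\delta.
\end{align}
Taking complements gives $\mathcal{P}(\bm x^+\in\mathbb{V}_{\bm\pi}^{\bar\xi})=\mathcal{P}(V_{\bm\pi}(\bm x^+)\leq\bar\xi)\geq 1-\delta$ for every $\bm x\in\mathbb{V}_{\bm\pi}^{\bar\xi}$, which is exactly the forward-invariance condition of \cref{def:safety} with $\mathbb{V}=\mathbb{V}_{\bm\pi}^{\bar\xi}$; hence $\bm\pi$ is $\delta$-safe, and in particular safe when initialized in $\mathbb{V}_{\bm\pi}^{\xi}$.

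Each step is routine once the risk operator is unpacked, so I do not anticipate a genuine analytical obstacle; the single point needing care is the bookkeeping of which level set plays which role. The tempting route is to try to prove invariance of the initial set $\mathbb{V}_{\bm\pi}^{\xi}$ directly, but Markov's inequality at threshold $\exp(\beta\xi)$ then returns only the vacuous bound $1$. The decisive observation is to exploit the gap $\bar\xi-\xi>0$ between the risk bound $\xi$ and the safety radius $\bar\xi$ furnished by \cref{lem:set2cost}: evaluating Markov's inequality at the larger threshold $\exp(\beta\bar\xi)$ is precisely what produces the nontrivial failure probability $\delta=\exp(\beta(\xi-\bar\xi))$. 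I would finally make explicit that the expectation in \eqref{eq:risk_op} and the probability $\mathcal P$ are taken with respect to the same law of $\bm x^+$, so that the Chernoff-type argument is internally consistent.
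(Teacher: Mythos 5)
Your proof is correct and follows essentially the same route as the paper: the paper bounds the indicator $I_{\bar\xi}(V_{\bm{\pi}}(\bm{x}^+))\leq \exp\left(\beta\left(V_{\bm{\pi}}(\bm{x}^+)-\bar\xi\right)\right)$ and takes expectations, which is exactly your Chernoff-type step of applying Markov's inequality to $\exp\left(\beta V_{\bm{\pi}}(\bm{x}^+)\right)$ at threshold $\exp(\beta\bar\xi)$, with \cref{lem:set2cost} and the gap $\bar\xi-\xi$ playing identical roles in both arguments. Your explicit bookkeeping of which level set serves as the invariant witness in \cref{def:safety} is slightly more careful than the paper's write-up, but it is a clarification rather than a different approach.
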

\begin{proof}
Due to \cref{lem:set2cost}, we can bound the probability of leaving $\mathbb{X}_{\mathrm{safe}}$ by the probability of leaving $\mathbb{V}_{\bm{\pi}}^{\bar{\xi}}$. Therefore, it is sufficient to 
derive an upper bound for the probability
\begin{align}\label{eq:safety_pf_1}
    \mathbb{P}\left( V_{\bm{\pi}}(\bm{x}^+)\geq \bar{\xi}\right)=\mathbb{E}_{\bm{x}^+}\left[I_{\bar{\xi}}(V_{\bm{\pi}}(\bm{x}^+)\right],
\end{align}
where the indicator function $I_{\bar{\xi}}:\mathbb{R}\rightarrow \{0,1\}$ is defined as
\begin{align}
    I_{\bar{\xi}}(V)=\begin{cases}
    0&\text{if } V\leq \bar{\xi}\\
    1&\text{if } V>\bar{\xi}.
    \end{cases}
\end{align}
Note that $V_{\bm{\pi}}(\cdot)$ is a deterministic function, such that the expectation affects only the random variable $\bm{x}^+$ in \eqref{eq:safety_pf_1}. Moreover, $\beta$ is positive, $\exp(0)=1$ and the exponential function is strictly increasing and positive. Therefore, we can bound the indicator function through the exponential expression\looseness=-1
\begin{align}
    I_{\bar{\xi}}(V_{\bm{\pi}}(\bm{x}^+))\leq \exp\left(\beta\left( V_{\bm{\pi}}(\bm{x}^+)-\bar{\xi} \right)\right)
\end{align}
due to the positivity of $\beta$. By taking the expectation of both sides, this inequality immediately leads to
\begin{align}\label{eq:safety_pf_3}
    &\!\mathbb{P}\left( V_{\bm{\pi}}(\bm{x}^+)\geq \bar{\xi}\right)\leq \mathbb{E}_{\bm{x}^+}\!\left[ \exp\left( \beta V_{\bm{\pi}}(\bm{x}^+) \right) \right]\exp(-\beta\bar{\xi}).\!
\end{align}
Due to the definition of the risk operator in \eqref{eq:risk_op}, we can simplify the right side of this inequality to obtain
\begin{align}\label{eq:safety_pf_4}
    &\mathbb{P}\left( V_{\bm{\pi}}(\bm{x}^+)\geq \bar{\xi}\right)\leq\exp\left(\beta \left( \mathbb{R}_{\beta}[V_{\bm{\pi}}(\bm{x}^+)]-\bar{\xi}\right)\right).
\end{align}
Since $\mathbb{R}_{\beta}[V_{\bm{\pi}}(\bm{x}^+)]\leq \xi$ is ensured by \eqref{eq:safety_cond}, we have 
$\mathbb{P}\left( V_{\bm{\pi}}(\bm{x}^+)\geq \bar{\xi}\right)\leq\delta$ with $\delta$ defined in \eqref{eq:delta}.
\end{proof}
This result provides a straightforward condition, which merely requires the evaluation of the risk operator and the computation of the cumulative cost, which is a problem commonly encountered in reinforcement learning. Moreover, it offers a simple expression for the probability of safety, such that it can easily be computed in practice. 

\begin{remark}\label{rem:safety_prob}
Since the probability of a safety violation~$\delta$ guaranteed by \cref{prop:safety} only depends on three parameters, it allows an intuitive interpretation:
\begin{itemize}
    \item The difference between $\xi$ and $\bar{\xi}$ can be interpreted as a safety margin since it requires the dynamics to be contractive on the set $\mathbb{V}_{\bm{\pi}}^{\bar{\xi}}\setminus\mathbb{V}_{\bm{\pi}}^{\xi}$ towards $\mathbb{V}_{\bm{\pi}}^{\xi}$. The larger this safety margin, the more contractive is the behavior at the boundary of $\mathbb{V}_{\bm{\pi}}^{\bar{\xi}}$ and consequently, it becomes more unlikely that the state reaches $\mathbb{X}\setminus\mathbb{V}_{\bm{\pi}}^{\bar{\xi}}$.
    \item The parameter $\beta$ reflects the risk-sensitivity of the safety condition \eqref{eq:safety_cond}. A large value of $\beta$ corresponds to a high risk-aversion since it causes the tails of the noise distribution $\rho$ and the function distribution $\mathcal{F}$ to have a larger effect on the left side of \eqref{eq:safety_cond}. In the extreme case of $\beta\rightarrow \infty$, this leads to \eqref{eq:safety_cond} corresponding to a condition on the worst case realization of $\bm{\omega}_k$ and $\bm{f}(\cdot)$ \cite{286253}. This increasing risk-aversion with growing~$\beta$ is intuitively accompanied by an increase in the probability of safety.
\end{itemize}
\end{remark}

\subsection{Safe backup Policies via Reinforcement Learning}\label{subsec:safe_backup}
While \cref{subsec:cons2cost} describes an approach for obtaining the probability of safety for a given policy, it does not address the problem of determining a safe policy. In this section, we show that this problem can be solved using standard reinforcement learning techniques through the following minimization problem
\begin{align}\label{eq:safe_pol}
    \bm{\pi}_{\mathrm{safe}} = \argmin\limits_{\bm{\pi}\in\Pi}\mathbb{E}_{\bm{x}}\left[ V_{\bm{\pi}}(\bm{x}) \right].
\end{align}
Even though this optimization problem does not involve the risk operator $\mathbb{R}_{\beta}[\cdot]$, its solution $\bm{\pi}_{\mathrm{safe}}$ is guaranteed to satisfy the conditions of \cref{prop:safety} under weak assumptions. This is demonstrated by the subsequent theorem. The proof follows after a discussion of the assumptions.
\begin{theorem}\label{th:safe_policy}
    Consider a cost function $c(\cdot)$ satisfying \eqref{eq:c_cond} and 
    assume that there exist a policy $\tilde{\bm{\pi}}(\cdot)$ and constants $\theta_1,\theta_2\in\mathbb{R}_+$ with $\theta_1<\nicefrac{1}{(1-\gamma)}$ such that
    \begin{align}\label{eq:value_bound}
        V_{\bm{\pi}}(\bm{x})\leq \theta_1 c(\bm{x})+ \theta_2, \quad\forall \bm{x}\in\mathbb{X}
    \end{align}
    is satisfied. Moreover, assume there exist constants $\theta_3,\theta_4\in\mathbb{R}_{0,+}$ such that
    \begin{align}\label{eq:V_lower_bound}
        V_{\bm{\pi}}(\bm{x})\geq \theta_3 c(\bm{x})+\theta_4, \quad\forall \bm{x}\in\mathbb{X}
    \end{align}
    holds for all policies $\bm{\pi}(\cdot)$.
    If
    \begin{align}\label{eq:c_lower_bound}
        \hat{c}>\frac{\theta_2}{\theta_3(\theta_1(\gamma-1)+1)}-\frac{\theta_4}{\theta_3}
    \end{align}
    holds,
    then, the policy \eqref{eq:safe_pol}
    is $\delta^*$-safe on $\mathbb{V}_{\xi^*}$ with $\delta^*=\exp\left(\beta^*\left(\xi^*-\bar{\xi}\right)\right)$, where
    \begin{subequations}
    \label{eq:opt_prob}
    \begin{align}
        \beta^*,\xi^* = &\argmin\limits_{\beta\in\mathbb{R}_+, \xi\in\mathbb{R}_+} \exp\left(\beta\left(\xi-\bar{\xi}\right)\right)\\
        &\text{s.t. }  \xi< \bar{\xi}\label{eq:xi_constraint}\\
        &\quad~ \text{\eqref{eq:safety_cond} holds.}
    \end{align}
    \end{subequations}
\end{theorem}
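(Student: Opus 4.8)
The plan is to reduce the claim to a feasibility statement for the optimization problem \eqref{eq:opt_prob} and then invoke \cref{prop:safety}. Any feasible pair $(\beta,\xi)$ satisfies \eqref{eq:safety_cond} with $\xi<\bar\xi$, so by \cref{prop:safety} the policy \eqref{eq:safe_pol} is $\exp(\beta(\xi-\bar\xi))$-safe; taking the minimizer $(\beta^*,\xi^*)$ then yields precisely $\delta^*$-safety on $\mathbb{V}_{\xi^*}$. Hence it suffices to exhibit one admissible pair. As the reference level I would fix $\bar\xi=\theta_3\hat c+\theta_4$: for any $\bm{x}\in\mathbb{X}_{\mathrm{unsafe}}$ the lower bound \eqref{eq:V_lower_bound} together with \eqref{eq:c_cond} gives $V_{\bm{\pi}_{\mathrm{safe}}}(\bm{x})\geq\theta_3\hat c+\theta_4=\bar\xi$, so that $\mathbb{V}_{\bm{\pi}_{\mathrm{safe}}}^{\bar\xi}\cap\mathbb{X}_{\mathrm{unsafe}}=\emptyset$ as demanded by \cref{lem:set2cost}.

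The core of the argument is a deterministic bound on the one-step expected value over this sublevel set. First I would exploit optimality: since \eqref{eq:safe_pol} returns the pointwise minimizer of the discounted cost, $\bm{\pi}_{\mathrm{safe}}$ inherits the upper bound of the reference policy, i.e.\ $V_{\bm{\pi}_{\mathrm{safe}}}(\bm{x})\leq V_{\tilde{\bm{\pi}}}(\bm{x})\leq\theta_1 c(\bm{x})+\theta_2$ for all $\bm{x}$ by \eqref{eq:value_bound}. The Bellman recursion $V_{\bm{\pi}_{\mathrm{safe}}}(\bm{x})=c(\bm{x})+\gamma\,\mathbb{E}_{\bm{x}^+}[V_{\bm{\pi}_{\mathrm{safe}}}(\bm{x}^+)]$ then yields $\mathbb{E}_{\bm{x}^+}[V_{\bm{\pi}_{\mathrm{safe}}}(\bm{x}^+)]=(V_{\bm{\pi}_{\mathrm{safe}}}(\bm{x})-c(\bm{x}))/\gamma$. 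On $\mathbb{V}_{\bm{\pi}_{\mathrm{safe}}}^{\bar\xi}$ the value obeys both $V_{\bm{\pi}_{\mathrm{safe}}}(\bm{x})\leq\bar\xi$ and $V_{\bm{\pi}_{\mathrm{safe}}}(\bm{x})\leq\theta_1 c(\bm{x})+\theta_2$, so maximizing $V_{\bm{\pi}_{\mathrm{safe}}}(\bm{x})-c(\bm{x})$ subject to these two linear constraints (the maximum being attained where they intersect, at $c(\bm{x})=(\bar\xi-\theta_2)/\theta_1$) gives $\sup_{\bm{x}\in\mathbb{V}_{\bm{\pi}_{\mathrm{safe}}}^{\bar\xi}}\mathbb{E}_{\bm{x}^+}[V_{\bm{\pi}_{\mathrm{safe}}}(\bm{x}^+)]\leq((\theta_1-1)\bar\xi+\theta_2)/(\gamma\theta_1)$. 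A short rearrangement shows this quantity is strictly below $\bar\xi$ exactly when $\theta_2<\bar\xi(\theta_1(\gamma-1)+1)$, which upon substituting $\bar\xi=\theta_3\hat c+\theta_4$ is precisely the hypothesis \eqref{eq:c_lower_bound}; the assumption $\theta_1<1/(1-\gamma)$ guarantees $\theta_1(\gamma-1)+1>0$, so the division preserves the direction of the inequality.

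It remains to pass from the expectation to the risk operator. Writing $M:=\sup_{\bm{x}\in\mathbb{V}_{\bm{\pi}_{\mathrm{safe}}}^{\bar\xi}}\mathbb{E}_{\bm{x}^+}[V_{\bm{\pi}_{\mathrm{safe}}}(\bm{x}^+)]<\bar\xi$, I would fix any $\xi\in(M,\bar\xi)$ and use that $\mathbb{R}_{\beta}[\cdot]\geq\mathbb{E}[\cdot]$ by Jensen's inequality while $\mathbb{R}_{\beta}[C]$ is non-decreasing in $\beta$ with $\mathbb{R}_{\beta}[C]\to\mathbb{E}[C]$ as $\beta\to 0^+$; choosing $\beta>0$ small enough then forces $\mathbb{R}_{\beta}[V_{\bm{\pi}_{\mathrm{safe}}}(\bm{x}^+)]\leq\xi$ on the whole sublevel set, establishing feasibility of \eqref{eq:opt_prob} and completing the proof through \cref{prop:safety}. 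I expect this final step to be the main obstacle: the convergence $\mathbb{R}_{\beta}\to\mathbb{E}$ must hold \emph{uniformly} in $\bm{x}$, which requires the quantity $\mathbb{E}_{\bm{x}^+}[\exp(\beta V_{\bm{\pi}_{\mathrm{safe}}}(\bm{x}^+))]$ to be finite and suitably controlled across the set---a tail condition on the noise $\rho$ and the model distribution $\mathcal{F}$ that I would need to secure (or assume) before the limiting argument becomes rigorous. The deterministic optimization of the second paragraph, by contrast, is routine.
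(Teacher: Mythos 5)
Your proposal is correct and follows essentially the same route as the paper's proof: pointwise optimality transfers \eqref{eq:value_bound} to $\bm{\pi}_{\mathrm{safe}}$, Bellman's identity bounds $\mathbb{E}_{\bm{x}^+}[V_{\bm{\pi}_{\mathrm{safe}}}(\bm{x}^+)]$ over $\mathbb{V}_{\bm{\pi}_{\mathrm{safe}}}^{\bar\xi}$ by exactly the paper's corner value $\bigl((\theta_1-1)\bar\xi+\theta_2\bigr)/(\gamma\theta_1)$, the hypothesis \eqref{eq:c_lower_bound} emerges from substituting $\bar\xi=\theta_3\hat{c}+\theta_4$, and the small-$\beta$ limit $\mathbb{R}_{\beta}\to\mathbb{E}$ (the paper's \cref{lem:risk2exp}, proved there via Taylor expansion) closes the feasibility argument. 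The uniformity/moment-finiteness caveat you flag in the final step is equally glossed over in the paper's own lemma, so your write-up is at least as rigorous as the published proof.
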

\paragraph*{Discussion}
While large values for $\theta_3$ and $\theta_4$ in \eqref{eq:V_lower_bound} are generally beneficial for admitting larger values of $\hat{c}$ in \eqref{eq:c_lower_bound}, it is always possible to trivially choose $\theta_3=1$, $\theta_4=0$ due to non-negativity of $c(\cdot)$. 
Condition \eqref{eq:value_bound} essentially requires a sufficiently fast decay of the immediate costs $c(\bm{x}_k)$ along trajectories for some policy $\tilde{\bm{\pi}}(\cdot)$. This decay can be achieved if, e.g., variants of exponential controllability hold \cite{Gaitsgory2018}. Since merely the existence of a policy $\tilde{\bm{\pi}}(\cdot)$ satisfying \eqref{eq:value_bound} is necessary, this admits the derivation of the constants $\theta_1$ and $\theta_2$ via properties such as exponential controllability \cite{Gaitsgory2018}. Therefore, the assumptions of \cref{th:safe_policy} are not restrictive in practice.\looseness=-1

Note that the required lower bound \eqref{eq:V_lower_bound} for all possible cost functions $V_{\bm{\pi}}(\cdot)$ is only necessary because of the offset $\theta_2$, which leads to a lower bound for the admissible values of $\bar{\xi}$. Since the admissible value $\bar{\xi}$ depends directly on the cost function $V_{\bm{\pi}}(\cdot)$, it indirectly depends on the policy $\bm{\pi}(\cdot)$. Therefore, $V_{\tilde{\bm{\pi}}}(\cdot)$ and $V_{\bm{\pi}_{\mathrm{safe}}}(\cdot)$ potentially admit different values for $\bar{\xi}$, such that general constraints cannot be posed on~$\bar{\xi}$. This issue is resolved by \eqref{eq:V_lower_bound}, which establishes a direct relationship between $\hat{c}$ and $\bar{\xi}$ for all possible cost functions $V_{\bm{\pi}}(\cdot)$ and thereby leads to the lower bound \eqref{eq:c_lower_bound}. If no offset exists, i.e., $\theta_2=\theta_4=0$, it can be easily seen that $\hat{c}>0$ must be satisfied. This is the trivial lower bound for $\hat{c}$ due to the assumed non-negativity of immediate cost functions $c(\cdot)$. Therefore, the offset $\theta_2$ is the only reason for the restriction of the admissible threshold $\hat{c}$.

\paragraph*{Proof} In order to prove \cref{th:safe_policy}, 
we first show that a risk-neutral variant of condition \eqref{eq:safety_cond} guarantees the existence of parameters $\xi$ and $\beta$ satisfying the requirements of \cref{prop:safety}.
\begin{lemma}\label{lem:risk2exp}
    Assume that 
    \begin{align}\label{eq:safety_cond_exp}
        \mathbb{E}_{\bm{x}^+}[V_{\bm{\pi}}(\bm{x}^+)]\leq \tilde{\xi}, \qquad \forall\bm{x}\in\mathbb{V}_{\bm{\bar{\xi}}}
    \end{align}
    holds for some constant $\tilde{\xi}<\bar{\xi}$.
    Then, there exist constants $\beta\in\mathbb{R}_+$ and $\xi<\bar{\xi}$ such that \eqref{eq:safety_cond} is satisfied.
\end{lemma}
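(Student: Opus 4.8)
The plan is to exploit the fact that the entropic risk operator $\mathbb{R}_\beta[\cdot]$ from \eqref{eq:risk_op} degenerates to the plain expectation as the risk parameter $\beta$ tends to zero. Since the risk-neutral hypothesis \eqref{eq:safety_cond_exp} holds with strict slack $\tilde\xi<\bar\xi$, there should be enough room to absorb the small discrepancy between $\mathbb{R}_\beta$ and $\mathbb{E}$ for sufficiently small $\beta$ while still keeping the resulting threshold below $\bar\xi$. Concretely, for each fixed $\bm{x}\in\mathbb{V}_{\bm{\pi}}^{\bar\xi}$ I would center the successor value as $Z_{\bm{x}}=V_{\bm{\pi}}(\bm{x}^+)-\mathbb{E}_{\bm{x}^+}[V_{\bm{\pi}}(\bm{x}^+)]$ and factor the constant mean out of the exponential, obtaining
\begin{align}
    \mathbb{R}_\beta[V_{\bm{\pi}}(\bm{x}^+)] = \mathbb{E}_{\bm{x}^+}[V_{\bm{\pi}}(\bm{x}^+)] + \frac{1}{\beta}\log\mathbb{E}\!\left[\exp(\beta Z_{\bm{x}})\right],
\end{align}
so that the excess of the risk operator over the expectation is exactly the normalized cumulant generating function of the centered variable $Z_{\bm{x}}$.

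Two elementary facts about this excess term drive the argument. First, by Jensen's inequality applied to the convex map $\exp$, one has $\mathbb{E}[\exp(\beta Z_{\bm{x}})]\geq\exp(\beta\,\mathbb{E}[Z_{\bm{x}}])=1$, so the excess is non-negative and $\mathbb{R}_\beta[V_{\bm{\pi}}(\bm{x}^+)]\geq\mathbb{E}_{\bm{x}^+}[V_{\bm{\pi}}(\bm{x}^+)]$; this is the conservatism expected of risk-aversion. Second, the cumulant generating function $\Lambda_{\bm{x}}(\beta)=\log\mathbb{E}[\exp(\beta Z_{\bm{x}})]$ satisfies $\Lambda_{\bm{x}}(0)=0$ and $\Lambda_{\bm{x}}'(0)=\mathbb{E}[Z_{\bm{x}}]=0$, so its normalization $\tfrac{1}{\beta}\Lambda_{\bm{x}}(\beta)$ vanishes as $\beta\to 0^+$ with leading term $\tfrac{\beta}{2}\operatorname{Var}(Z_{\bm{x}})$. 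Combining this with \eqref{eq:safety_cond_exp} would yield $\mathbb{R}_\beta[V_{\bm{\pi}}(\bm{x}^+)]\leq\tilde\xi+\tfrac{1}{\beta}\Lambda_{\bm{x}}(\beta)$, after which I would pick $\beta$ so small that the excess is strictly below $\bar\xi-\tilde\xi$ and set $\xi=\tilde\xi+\sup_{\bm{x}}\tfrac{1}{\beta}\Lambda_{\bm{x}}(\beta)<\bar\xi$, which by construction satisfies \eqref{eq:safety_cond}.

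The main obstacle is making a single choice of $\beta$ work \emph{uniformly} over all $\bm{x}\in\mathbb{V}_{\bm{\pi}}^{\bar\xi}$, since the successor state $\bm{x}^+$ may leave this set and $V_{\bm{\pi}}(\bm{x}^+)$ is therefore not controlled by $\bar\xi$; pointwise convergence $\tfrac{1}{\beta}\Lambda_{\bm{x}}(\beta)\to 0$ is immediate, but one $\beta$ must dominate the entire set. I would secure this through a uniform bound on the exponential moments $\mathbb{E}[\exp(\beta V_{\bm{\pi}}(\bm{x}^+))]$ — precisely the quantity whose finiteness is already required for $\mathbb{R}_\beta$ to be well defined under the possibly unbounded noise $\bm{\omega}$. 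If $V_{\bm{\pi}}$ is globally bounded by some $V_{\max}$, a mild consequence of a bounded immediate cost together with the growth bound \eqref{eq:value_bound}, then Hoeffding's lemma gives $\Lambda_{\bm{x}}(\beta)\leq\beta^2 V_{\max}^2/8$ and hence the $\bm{x}$-independent estimate $\tfrac{1}{\beta}\Lambda_{\bm{x}}(\beta)\leq\beta V_{\max}^2/8$, which closes the argument; absent such a uniform bound this is the step demanding the most care, and I would flag the exponential-moment condition explicitly rather than hide it behind pointwise continuity.
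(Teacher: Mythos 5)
Your proposal is correct, and it rests on the same core principle as the paper's proof: the entropic risk $\mathbb{R}_{\beta}$ degenerates to the expectation as $\beta\to 0^+$, so the strict slack $\bar{\xi}-\tilde{\xi}>0$ can absorb the risk premium for sufficiently small $\beta$. The technical implementation differs, however. The paper Taylor-expands $\mathbb{E}[\exp(\beta V_{\bm{\pi}}(\bm{x}^+))]$ into its moment series, inserts the premise \eqref{eq:safety_cond_exp} into the first-order term, applies $\log(1+a)<a$, and then asserts that the residual $\beta\left(\tfrac{1}{2}\mathbb{E}[V_{\bm{\pi}}^2(\bm{x}^+)]+\dots\right)$ in \eqref{eq:risk_Taylor} can be driven below $\bar{\xi}-\tilde{\xi}$ by shrinking $\beta$; this is a pointwise-in-$\bm{x}$ argument that leaves implicit both the convergence of the moment series (i.e., finiteness of exponential moments) and the fact that a \emph{single} $\beta$ must work for all $\bm{x}\in\mathbb{V}_{\bm{\pi}}^{\bar{\xi}}$. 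You instead center the successor value, split $\mathbb{R}_{\beta}$ exactly into expectation plus normalized cumulant generating function, and — this is the genuine added value — confront the uniformity issue head-on by invoking Hoeffding's lemma under a boundedness assumption on $V_{\bm{\pi}}$, yielding the explicit $\bm{x}$-independent rate $\tfrac{1}{\beta}\Lambda_{\bm{x}}(\beta)\leq \beta V_{\max}^2/8$ and hence a concrete admissible pair $(\beta,\xi)$. What the paper's route buys is brevity and a proof that directly mirrors the intuition stated after the lemma; what your route buys is rigor: it makes explicit the exponential-moment/uniformity hypotheses that the paper's ``sufficiently small $\beta$'' step silently requires, and it replaces a qualitative limit argument with a quantitative bound. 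The only caveat is that global boundedness of $V_{\bm{\pi}}$ is an assumption beyond what the lemma states (it follows from bounded immediate cost via $V_{\bm{\pi}}\leq c_{\max}/(1-\gamma)$, not really from \eqref{eq:value_bound}, whose right-hand side is unbounded if $c$ is); but since the paper's own argument needs some uniform moment control anyway, flagging this condition explicitly, as you do, is a strengthening rather than a weakness.
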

\begin{proof}
By the Taylor series expansion of the exponential function, we have
\begin{align}
    \mathbb{R}_{\beta}[V_{\bm{\pi}}&(\bm{x}^+)]= \\
    & \frac{1}{\beta} \log \!\left(\!1 \!+\! \beta \mathbb{E}_{\bm{x}^+} \!\left[V_{\bm{\pi}}(\bm{x}^+)\right] \!+\! \frac{\beta^2}{2}\mathbb{E}_{\bm{x}^+} \!\left[V_{\bm{\pi}}^2(\bm{x}^+)\right] \!+\! \dots\!\right)\! .\nonumber
\end{align}
From the premise of the lemma, it follows that
\begin{align}
\mathbb{R}_{\beta}[V_{\bm{\pi}}(\bm{x}^+)] \leq& \\
    &\frac{1}{\beta} \log \!\left(\!1 \!+\! \beta \tilde{\xi} \!+\! \frac{\beta^2}{2}\mathbb{E}_{\bm{x}^+}\!\! \left[V_{\bm{\pi}}^2(\bm{x}^+)\right] \!+\! \dots\!\right)\!.\nonumber
\end{align}
Since $\log (1+a)<a$ for $a \in\mathbb{R}_+$ and by noting the positivity of $V_{\bm{\pi}}(\bm{x}^+)$ and the risk-aversion parameter $\beta$, we have
\begin{align}\label{eq:risk_Taylor}
    \mathbb{R}_{\beta}[V_{\bm{\pi}}(\bm{x}^+)]&< \tilde{\xi} + \beta\left(\frac{1}{2}\mathbb{E}_{\bm{x}^+} \left[V_{\bm{\pi}}^2(\bm{x}^+)\right] + \dots\right).
\end{align}
Since the second summand can be brought arbitrarily close to $0$ by choosing a sufficiently small $\beta$, there exists a $\beta$ such that the right side of \eqref{eq:risk_Taylor} is smaller than $\bar{\xi}$, which concludes the proof.
\end{proof}
The key idea behind this result is that \eqref{eq:safety_cond} converges to~\eqref{eq:safety_cond_exp} for $\beta\rightarrow 0$. Therefore, it is sufficient to determine a policy $\bm{\pi}$, which satisfies the risk-neutral condition \eqref{eq:safety_cond_exp}, for ensuring~\eqref{eq:safety_cond} with a suitably small value of $\beta\in\mathbb{R}_+$. 

Although \eqref{eq:safety_cond_exp} is a risk-neutral condition, it exhibits an expectation with respect to the next state $\bm{x}^+$. Therefore, it does not directly enable the applicability of standard RL techniques and consequently, it does not coincide with the acquisition function considered in the definition of the safe policy \eqref{eq:safe_pol}. In order to overcome this issue, we exploit \eqref{eq:value_bound} to relate $\mathbb{E}_{\bm{x}^+}[V_{\bm{\pi}}(\bm{x}^+)]$ to $V_{\bm{\pi}}(\bm{x})$. This is achieved using the following lemma.

\begin{lemma}\label{th:decrease_guarantee}
Assume that there exist $\theta_1,\theta_2\in\mathbb{R}_+$ with $\theta_1<\nicefrac{1}{(1-\gamma)}$ such that \eqref{eq:value_bound} is satisfied.
Then, it holds that
\begin{align}
    \mathbb{E}_{\bm{x}^+}\![V_{\bm{\pi}}(\bm{x}^+)]\!-\!V_{\bm{\pi}}(\bm{x}) \leq \frac{\theta_1-\theta_1\gamma-1}{\theta_1\gamma}V_{\bm{\pi}}(\bm{x})+\frac{\theta_2}{\gamma\theta_1}.
\end{align}
\end{lemma}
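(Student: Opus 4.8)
The plan is to exploit the one-step Bellman recursion satisfied by the value function \eqref{eq:cum_cost} and then substitute the upper bound \eqref{eq:value_bound} to eliminate the immediate cost $c(\bm{x})$. First I would peel off the $k=0$ summand in \eqref{eq:cum_cost}: since $\bm{x}_0=\bm{x}$ is deterministic we have $c(\bm{x}_0)=c(\bm{x})$, and the remaining tail, after reindexing and factoring out one power of $\gamma$, is by the Markov structure of the transition $\bm{x}^+=\bm{f}(\bm{x},\bm{\pi}(\bm{x}),\bm{\omega})$ exactly $\gamma\,\mathbb{E}_{\bm{x}^+}[V_{\bm{\pi}}(\bm{x}^+)]$. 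This yields the recursion
\begin{align}\label{eq:plan_bellman}
    V_{\bm{\pi}}(\bm{x}) = c(\bm{x}) + \gamma\,\mathbb{E}_{\bm{x}^+}[V_{\bm{\pi}}(\bm{x}^+)].
\end{align}

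Next I would solve \eqref{eq:plan_bellman} for the one-step expected value and subtract $V_{\bm{\pi}}(\bm{x})$ to obtain the exact identity
\begin{align}\label{eq:plan_diff}
    \mathbb{E}_{\bm{x}^+}[V_{\bm{\pi}}(\bm{x}^+)] - V_{\bm{\pi}}(\bm{x}) = \frac{(1-\gamma)V_{\bm{\pi}}(\bm{x}) - c(\bm{x})}{\gamma}.
\end{align}
Here the immediate cost enters with the negative coefficient $-1/\gamma$. The assumption \eqref{eq:value_bound} rearranges to the lower bound $c(\bm{x}) \geq (V_{\bm{\pi}}(\bm{x}) - \theta_2)/\theta_1$, which is precisely what is needed to upper bound the right-hand side of \eqref{eq:plan_diff}. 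Substituting this bound, using $\gamma>0$ to preserve the inequality direction, and collecting the terms proportional to $V_{\bm{\pi}}(\bm{x})$ produces the coefficient $((1-\gamma)-1/\theta_1)/\gamma = (\theta_1-\theta_1\gamma-1)/(\theta_1\gamma)$ together with the constant $\theta_2/(\gamma\theta_1)$, which is exactly the claimed inequality.

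The only points requiring care are the sign bookkeeping and the justification of \eqref{eq:plan_bellman}. For the former, the upper bound \eqref{eq:value_bound} must be applied as a \emph{lower} bound on $c(\bm{x})$ because $c$ appears negatively in \eqref{eq:plan_diff}; applying it in the wrong direction would reverse the inequality. For the latter, the recursion \eqref{eq:plan_bellman} relies on interpreting the joint expectation over $\bm{f}\sim\mathcal{F}$ and the noise as inducing a Markov transition kernel, so that the discounted tail starting from $\bm{x}^+$ reconstitutes $V_{\bm{\pi}}(\bm{x}^+)$; this is the standard value-function identity in the assumed MDP setting. I would expect this step to be the main conceptual obstacle, as the algebra thereafter is routine. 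Finally, although not needed for the inequality itself, the hypothesis $\theta_1<\nicefrac{1}{(1-\gamma)}$ renders the leading coefficient $(\theta_1-\theta_1\gamma-1)/(\theta_1\gamma)$ negative, which is what makes \cref{th:decrease_guarantee} a genuine decrease bound exploited downstream when relating \eqref{eq:safe_pol} to the risk-neutral condition \eqref{eq:safety_cond_exp}.
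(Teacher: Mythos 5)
Your proposal is correct and follows essentially the same route as the paper: the paper likewise starts from Bellman's identity $V_{\bm{\pi}}(\bm{x})=c(\bm{x})+\gamma\,\mathbb{E}_{\bm{x}^+}[V_{\bm{\pi}}(\bm{x}^+)]$, solves it for the expected next-state value to get $\Delta V_{\bm{\pi}}(\bm{x})=\frac{1}{\gamma}\left(-c(\bm{x})+(1-\gamma)V_{\bm{\pi}}(\bm{x})\right)$, substitutes the rearranged bound $c(\bm{x})\geq\frac{V_{\bm{\pi}}(\bm{x})-\theta_2}{\theta_1}$ from \eqref{eq:value_bound}, and rearranges to the claimed coefficients. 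The only difference is cosmetic: you derive the Bellman recursion from the definition \eqref{eq:cum_cost} explicitly, whereas the paper invokes it as a known identity.
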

\begin{proof}
By solving Bellman's identity
\begin{align}
    V_{\bm{\pi}}(\bm{x})=c(\bm{x})+\gamma \mathbb{E}_{\bm{x}^+}\left[V_{\bm{\pi}}(\bm{x}')\right],
\end{align}
for $\mathbb{E}_{\bm{x}^+}\left[V_{\bm{\pi}}(\bm{x}')\right]$, 
we can express $\Delta V_{\bm{\pi}}(\bm{x})=\mathbb{E}_{\bm{x}^+}[V_{\bm{\pi}}(\bm{x}^+)]-V_{\bm{\pi}}(\bm{x})$ as
\begin{align}\label{eq:DeltaV_with c}
    \Delta V_{\bm{\pi}}(\bm{x})=\frac{1}{\gamma}(-c(\bm{x})+(1-\gamma)V_{\bm{\pi}}(\bm{x})).
\end{align}
Due to \eqref{eq:value_bound}, we have
\begin{align}
    c(\bm{x})\geq\frac{V_{\bm{\pi}}(\bm{x})-\theta_2}{\theta_1},
\end{align}
which allows us to bound \eqref{eq:DeltaV_with c} by
\begin{align}
    \!\!\Delta V_{\bm{\pi}}(\bm{x})&\leq\frac{1}{\gamma}\left(-\frac{V_{\bm{\pi}}(\bm{x})-\theta_2}{\theta_1}+(1-\gamma)V_{\bm{\pi}}(\bm{x})\right).
\end{align}
Rearranging the terms on the right side finally yields
\begin{align}
    \Delta V_{\bm{\pi}}\leq \frac{\theta_1-\theta_1\gamma-1}{\theta_1\gamma}V_{\bm{\pi}}(\bm{x})+\frac{\theta_2}{\gamma\theta_1},
\end{align}
where $\nicefrac{(\theta_1-\theta_1\gamma-1)}{\theta_1\gamma}$ is guaranteed to be negative since $\theta_1<\nicefrac{1}{(1-\gamma)}$ is assumed. 
\end{proof}

\cref{th:decrease_guarantee} ensures that the minimization of $V_{\bm{\pi}}(\bm{x})$ also reduces $\mathbb{E}_{\bm{x}^+}[V_{\bm{\pi}}(\bm{x}^+)]$. This directly allows proving  \cref{th:safe_policy} in combination with \cref{lem:risk2exp} as shown in the following.

\noindent\hspace{2em}{\itshape Proof of \cref{th:safe_policy}: }
It is straightforward to see that optimizing with respect to the expectation over $\bm{x}$ yields identical policies $\bm{\pi}_{\mathrm{safe}}(\cdot)$ as the point-wise optimum $\pi_{\bm{x}}(\bm{x})=\argmin_{\bm{\pi}\in\Pi} V_{\bm{\pi}}(\bm{x})$ for a given $\bm{x}$ and a continuous transition function $\bm{f}(\cdot,\cdot,\cdot)$. Due to optimality of $\bm{\pi}_{\bm{x}}(\cdot)$, we additionally have the inequality $V_{\bm{x}}(\bm{x})\leq V_{\tilde{\bm{\pi}}}(\bm{x})$ for all $\bm{x}\in\mathbb{X}$. Therefore, it follows from \cref{th:decrease_guarantee} that
    \begin{align}\label{eq:Vsafe_bound}
        \mathbb{E}[V_{\bm{\pi}_{\mathrm{safe}}}(\bm{x}^+)]
        &\leq \frac{1}{\gamma}\left(1-\frac{1}{\theta_1}\right)V_{\bm{\pi}_{\mathrm{safe}}}(\bm{x}) +\frac{\theta_2}{\gamma\theta_1}.
    \end{align}
    Since the right side of \eqref{eq:Vsafe_bound} is linear in $V_{\bm{\pi}_{\mathrm{safe}}}(\bm{x})$, the maximum inside $\mathbb{V}_{\bar{\xi}}$ is achieved for $V_{\bm{\pi}_{\mathrm{safe}}}(\bm{x})=\bar{\xi}$.
    Therefore, we obtain the inequality 
    \begin{align}\label{eq:xi_bound_implicit}
        \bar{\xi}>\frac{1}{\gamma}\left(1-\frac{1}{\theta_1}\right)\bar{\xi} +\frac{\theta_2}{\gamma\theta_1}
    \end{align}
    since \cref{lem:risk2exp} requires $\mathbb{E}[V_{\bm{\pi}_{\mathrm{safe}}}(\bm{x}^+)]\leq \xi<\bar{\xi}$. 
    Solving for~$\bar{\xi}$ and noting that $\bar{\xi}=\theta_3\hat{c}+\theta_4$ due to \eqref{eq:V_lower_bound} yields
    \begin{align}
        \theta_3\hat{c}+\theta_4>\frac{\theta_2}{\theta_1(\gamma-1)+1}.
    \end{align}
    \newcommand*{\QEDA}{\null\nobreak\hfill\ensuremath{\blacksquare}}
    It is straightforward to see that \eqref{eq:c_lower_bound} guarantees the satisfaction of this inequality, such that \cref{lem:risk2exp} and \cref{prop:safety} ensure that \eqref{eq:opt_prob} is feasible and results in a probability $\delta^*<1$. This immediately implies $\delta^*$-safety of $\bm{\pi}_{\mathrm{safe}}(\cdot)$ and thereby concludes the proof.\null\nobreak\hfill\QEDclosed

\subsection{Risk-Sensitive Inhibitory Control for Safe Roll-outs}\label{subsec:safety_filt}

\begin{algorithm}[t]
\newcommand\mycommfont[1]{\footnotesize\ttfamily\textcolor{blue}{#1}}
\SetCommentSty{mycommfont}
    \SetInd{0.55em}{0.55em}
    \DontPrintSemicolon
    \tcc{Solve \eqref{eq:opt_pol}}
    \While{optimization not converged}{
        Sample function $\hat{f}(\cdot)\sim\mathcal{F}$\;
        Roll-out policy $\bm{\pi}^*(\cdot)$ on $\hat{f}(\cdot)$\;
        Update $\bm{\pi}^*(\cdot)$ using gathered system data\;
    }
    \tcc{Solve \eqref{eq:safe_pol}}
    \While{optimization not converged}{
        Sample function $\hat{f}(\cdot)\sim\mathcal{F}$\;
        Roll-out policy $\bm{\pi}_{\mathrm{safe}}(\cdot)$ on $\hat{f}(\cdot)$\;
        Update $\bm{\pi}_{\mathrm{safe}}(\cdot)$ using gathered system data\;
    }
    \tcc{$\!$Safe roll-out via online optimization~\eqref{eq:safety_filt_fixed_xi}$\!\!$}
    Apply $\bm{\pi}_{\mathrm{safe}}^*(\cdot)$ to unknown system $f(\cdot)$\;
	\caption{Safe RL using Risk-Sensitive Filters}
	\label{alg:safe_RL}
\end{algorithm}

Based on the safe policy $\bm{\pi}_{\mathrm{safe}}(\cdot)$ obtained using \eqref{eq:safe_pol}, we propose a risk-sensitive inhibitory control strategy for enabling safe RL as outlined in Alg.~\ref{alg:safe_RL}. For this purpose, we first obtain an optimal, potentially unsafe policy by solving the optimization problem
\begin{align}\label{eq:opt_pol}
    \bm{\pi}^*=\argmax\limits_{\bm{\pi}\in\Pi} \mathbb{E}_{\bm{f},\bm{\omega},\bm{x}_0}\left[\sum\limits_{k=0}^{\infty} \gamma^k r(\bm{x}_k,\bm{\pi}(\bm{x}_k))\right],
\end{align}
where $r:\mathbb{X}\times\mathbb{U}\rightarrow\mathbb{R}_{0,+}$ denotes a reward function and $\bm{x}_k$ is defined through the iterative application of \eqref{eq:true_sys} with $\bm{x}_0=\bm{x}$ and $\bm{u}_k=\bm{\pi}(\bm{x}_k)$. This problem can be solved using standard off-policy reinforcement learning algorithms such as soft actor-critic reinforcement learning \cite{haarnoja2018soft}. 
Afterward, a safe backup policy $\bm{\pi}_{\mathrm{safe}}(\cdot)$ is computed by solving \eqref{eq:safe_pol}, which can be straightforwardly achieved using standard off-policy reinforcement learning techniques. Finally, we apply the policy to the true system \eqref{eq:true_sys}. For this roll-out, we employ the risk-sensitive filter
\begin{subequations}\label{eq:safety_filt_fixed_xi}
\begin{align}
    \bm{\pi}^*_{\mathrm{safe}}(\bm{x})=&\argmin\limits_{\bm{u}\in\mathbb{U}}\|\bm{\pi}^*(\bm{x})-\bm{u}\|\\
    &\text{s.t. }\mathbb{R}_{\beta}[V_{\bm{\pi}_{\mathrm{safe}}}(\bm{f}(\bm{x},\bm{u},\bm{\omega}))]\leq \xi^*
    \label{eq:safety_filt_const}
\end{align}
\end{subequations}
which makes use of the safe backup policy $\bm{\pi}_{\mathrm{safe}}(\cdot)$ through the cost function $V_{\bm{\pi}_{\mathrm{safe}}}$ and minimally adjusts the policy $\bm{\pi}^*(\cdot)$ such that the safety condition \eqref{eq:safety_cond} is satisfied. 

Due to the safety filter \eqref{eq:safety_filt_fixed_xi}, the state constraints $\mathbb{X}_{\mathrm{safe}}$ can straightforwardly be considered in Alg.~\ref{alg:safe_RL}. In fact, $\delta$-safety of $\bm{\pi}^*_{\mathrm{safe}}(\cdot)$ is directly inherited from the safe backup policy $\bm{\pi}_{\mathrm{safe}}(\cdot)$ as shown in the following theorem.
\begin{theorem}\label{th:safety_filt}
    Consider a cost function $c(\cdot)$ satisfying \eqref{eq:c_cond} and a threshold $\hat{c}$, for which \eqref{eq:c_lower_bound} holds. Moreover, assume that there exists a policy $\tilde{\bm{\pi}}(\cdot)$ satisfying \eqref{eq:value_bound} with $\theta_1<\nicefrac{1}{(1-\gamma)}$ for all $\bm{x}\in\mathbb{X}_{\mathrm{safe}}$. Then, the safety filtered policy \eqref{eq:safety_filt_fixed_xi} is $\delta^*$-safe on $\mathbb{V}_{\bm{\pi}_{\mathrm{safe}}}^{\xi^*}$ with $\delta^*=\exp\left(\beta^*\left(\xi^*-\bar{\xi}\right)\right)$, where $\beta^*$ and $\xi^*$ are defined in \eqref{eq:opt_prob}. 
\end{theorem}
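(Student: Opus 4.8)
The plan is to show that the theorem's conclusion is \emph{inherited} from the backup policy $\bm{\pi}_{\mathrm{safe}}(\cdot)$: once we know that the filter \eqref{eq:safety_filt_fixed_xi} is feasible, its minimizer automatically satisfies the risk-sensitive condition \eqref{eq:safety_cond}, and $\delta^*$-safety then follows by reapplying \cref{prop:safety}. Accordingly, I would organize the argument around two steps, namely establishing feasibility of \eqref{eq:safety_filt_fixed_xi}, and then transferring the conclusion of \cref{prop:safety} to the filtered policy even though the certifying value function belongs to a different policy.

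For feasibility, I would first invoke \cref{th:safe_policy}. Under the stated hypotheses, which coincide with those of \cref{th:safe_policy} restricted to the relevant region $\mathbb{X}_{\mathrm{safe}}$, that theorem guarantees that the optimization \eqref{eq:opt_prob} is feasible with minimizers $\beta^*,\xi^*$ and that the backup policy satisfies the premise of \cref{prop:safety}, i.e., $\mathbb{R}_{\beta^*}[V_{\bm{\pi}_{\mathrm{safe}}}(\bm{f}(\bm{x},\bm{\pi}_{\mathrm{safe}}(\bm{x}),\bm{\omega}))]\leq\xi^*$ for all $\bm{x}\in\mathbb{V}_{\bm{\pi}_{\mathrm{safe}}}^{\bar{\xi}}$. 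Reading this as a statement about the filter, it says precisely that the choice $\bm{u}=\bm{\pi}_{\mathrm{safe}}(\bm{x})$ is admissible in \eqref{eq:safety_filt_fixed_xi} for every $\bm{x}\in\mathbb{V}_{\bm{\pi}_{\mathrm{safe}}}^{\bar{\xi}}$. Hence the feasible set of \eqref{eq:safety_filt_fixed_xi} is nonempty on all of $\mathbb{V}_{\bm{\pi}_{\mathrm{safe}}}^{\bar{\xi}}$, and any minimizer $\bm{\pi}^*_{\mathrm{safe}}(\bm{x})$ therefore also obeys the constraint \eqref{eq:safety_filt_const}, so that $\mathbb{R}_{\beta^*}[V_{\bm{\pi}_{\mathrm{safe}}}(\bm{f}(\bm{x},\bm{\pi}^*_{\mathrm{safe}}(\bm{x}),\bm{\omega}))]\leq\xi^*$ holds throughout $\mathbb{V}_{\bm{\pi}_{\mathrm{safe}}}^{\bar{\xi}}$.

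To conclude, I would reapply \cref{prop:safety} with the certificate $V_{\bm{\pi}_{\mathrm{safe}}}$ and the transition generated by the filtered policy, $\bm{x}^+=\bm{f}(\bm{x},\bm{\pi}^*_{\mathrm{safe}}(\bm{x}),\bm{\omega})$. The crucial observation, and the main obstacle to get right, is that the proof of \cref{prop:safety} never uses the fact that $V_{\bm{\pi}}$ is the value function of the policy producing $\bm{x}^+$: it only requires (i) that $V_{\bm{\pi}_{\mathrm{safe}}}$ is a deterministic function whose sublevel set $\mathbb{V}_{\bm{\pi}_{\mathrm{safe}}}^{\bar{\xi}}$ avoids $\mathbb{X}_{\mathrm{unsafe}}$, which holds by \cref{lem:set2cost} applied to $\bm{\pi}_{\mathrm{safe}}$, and (ii) the risk bound on the realized next state, which was just established. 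The policy enters the chain \eqref{eq:safety_pf_1}--\eqref{eq:safety_pf_4} only through the distribution of $\bm{x}^+$, so rerunning that exponential/Markov estimate verbatim yields $\mathbb{P}(V_{\bm{\pi}_{\mathrm{safe}}}(\bm{x}^+)\geq\bar{\xi})\leq\exp(\beta^*(\xi^*-\bar{\xi}))=\delta^*$, i.e., $\mathbb{P}(\bm{x}^+\in\mathbb{V}_{\bm{\pi}_{\mathrm{safe}}}^{\bar{\xi}})\geq 1-\delta^*$ for every $\bm{x}\in\mathbb{V}_{\bm{\pi}_{\mathrm{safe}}}^{\bar{\xi}}$.

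I would close by matching this to \cref{def:safety}: taking the forward-invariant set $\mathbb{V}=\mathbb{V}_{\bm{\pi}_{\mathrm{safe}}}^{\bar{\xi}}$ and noting the inclusions $\mathbb{V}_{\bm{\pi}_{\mathrm{safe}}}^{\xi^*}\subseteq\mathbb{V}_{\bm{\pi}_{\mathrm{safe}}}^{\bar{\xi}}\subseteq\mathbb{X}_{\mathrm{safe}}$, the filtered policy is $\delta^*$-safe on $\mathbb{V}_{\bm{\pi}_{\mathrm{safe}}}^{\xi^*}$ with $\delta^*=\exp(\beta^*(\xi^*-\bar{\xi}))$, as claimed. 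The only delicate points that warrant careful verification are the decoupling of the certificate $V_{\bm{\pi}_{\mathrm{safe}}}$ from the transition policy $\bm{\pi}^*_{\mathrm{safe}}$ described above, and the fact that \cref{th:safe_policy} supplies feasibility of \eqref{eq:opt_prob} on the whole larger sublevel set $\mathbb{V}_{\bm{\pi}_{\mathrm{safe}}}^{\bar{\xi}}$ rather than merely on $\mathbb{V}_{\bm{\pi}_{\mathrm{safe}}}^{\xi^*}$.
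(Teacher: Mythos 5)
Your proof is correct and follows essentially the same route as the paper's: feasibility of the filter \eqref{eq:safety_filt_fixed_xi} via the trivial solution $\bm{u}=\bm{\pi}_{\mathrm{safe}}(\bm{x})$ guaranteed by \cref{th:safe_policy}, followed by an application of \cref{prop:safety}. If anything, your write-up is more careful than the paper's three-line proof, which passes silently over the two subtleties you flag explicitly — that the argument of \cref{prop:safety} only uses the distribution of $\bm{x}^+$ and the sublevel-set geometry of $V_{\bm{\pi}_{\mathrm{safe}}}$, so the certificate decouples from the transition policy $\bm{\pi}^*_{\mathrm{safe}}$, and that the constraint must hold on all of $\mathbb{V}_{\bm{\pi}_{\mathrm{safe}}}^{\bar{\xi}}$ rather than merely on $\mathbb{V}_{\bm{\pi}_{\mathrm{safe}}}^{\xi^*}$.
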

\begin{proof}
    Due to \cref{th:safe_policy}, $\bm{\pi}_{\mathrm{safe}}(\cdot)$ defined in \eqref{eq:safe_pol} satisfies \eqref{eq:safety_filt_const}. Thus, the optimization problem \eqref{eq:safety_filt_fixed_xi} is guaranteed to be feasible for all states $\bm{x}\in\mathbb{V}_{\bm{\pi}_{\mathrm{safe}}}^{\xi^*}$ with the trivial solution $\bm{u}=\bm{\pi}_{\mathrm{safe}}(\bm{x})$. Finally, $\delta^*$-safety directly follows from \cref{prop:safety}.
\end{proof}
While this theorem employs the optimal parameters $\beta^*$ and $\xi^*$, it immediately follows from the proof of \cref{th:safe_policy} that for every value $\xi$ with $\xi^*\leq \xi<\bar{\xi}$, there exists a $\beta\in\mathbb{R}_+$ satisfying \eqref{eq:xi_constraint}. Therefore, $\delta$-safety on $\mathbb{V}_{\xi} \supset\mathbb{V}_{\xi^*}$ with $\delta>\delta^*$ can be straightforwardly ensured in practice by choosing a sufficiently large value $\xi<\bar{\xi}$ and a suitably small value $\beta\in\mathbb{R}_+$. 

\begin{remark}
    When $\beta$ becomes larger, the control becomes more pessimistic, and therefore, the probability of safety generally increases. However, there exists a critical value at which the safety constraint \eqref{eq:safety_filt_const} becomes infeasible for all $\xi<\bar{\xi}$. That is, the control becomes too phobic to act. This resembles a well-known behavior in risk-sensitive control and RL commonly referred to as neurotic breakdown \cite{fleming2006risk}. 
\end{remark}

\section{Simulations}\label{sec:num_eval}        

In this section, we evaluate the proposed risk-sensitive inhibitory control approach, described in Alg.~\ref{alg:safe_RL}, using the popular Mujoco Half-Cheetah environment \cite{todorov2012mujoco}. The Half-Cheetah is a planar model of a large, cat-like robot with 6 actuated joints. The main goal is to maximize the robot's walking velocity with the least control effort possible, which is encoded in the default reward function.
We consider the default model parameters for the Cheetah robot, but assume a body mass perturbed by a Gaussian distributed random variable with $0$ mean and standard deviation $0.1$. In order to obtain a challenging safety condition, we 
set optimality and safety in a direct conflict similar as in \cite{Curi2022} by 
constraining the velocity to $v\leq v_{\mathrm{crit}}$, $v_{\mathrm{crit}}=2$. 
As cost function for the computation of the safe policy \eqref{eq:safe_pol}, $c(\bm{x})=v-\underline{v}$ is employed with threshold $\hat{c}=2-\underline{v}$, where $\underline{v}=\changed{-}10$ denotes the considered minimum velocity of the Half-Cheetah robot. This cost function encourages the robot to run with a negative velocity, such that the distance to the safety threshold velocity $v_{\mathrm{crit}}$ is maximized. Note that the subtraction of $\underline{v}$ is necessary to ensure the non-negativity of the cost $c(\cdot)$ assumed in our derivations, but it merely causes a constant off-set in the cumulative cost $V_{\bm{\pi}}(\cdot)$.

The optimal and safe policies are obtained using the Soft-Actor Critic (SAC) algorithm \cite{haarnoja2018soft} with $400$ training iterations each with $1000$ time steps and the hyper-parameters provided by \cite{Liang2018}. For computing the expectations over dynamics $\bm{f}(\cdot)$ in \eqref{eq:cum_cost} and \eqref{eq:opt_pol}, we randomly sample $10$ body masses, such that we can use the corresponding sample environments to empirically approximate all necessary expected values.
The risk-sensitive safety filter \eqref{eq:safety_filt_fixed_xi} is implemented using the cross-entropy method~\cite{botev2013cross} with $5$ iterations per time step and $10$ particles. The safety constraints are considered in an augmented objective function using fixed Lagrange multipliers, such that they are effectively enforced using soft constraints to allow recovery after constraint violations. The risk operator $\mathbb{R}_{\beta}[\cdot]$ is approximated through $100$ sample environments. For each parameter combination $(\xi,\beta)$, $100$ time steps are simulated and $3$ random seeds are averaged.

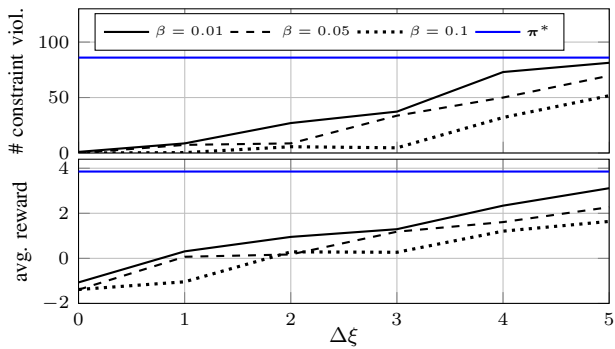
\begin{figure}
    \centering
    \def\fileopt{eval_test.txt}
    \def\file{sim_results.txt}
	\begin{tikzpicture}
	\begin{axis}[ylabel={\footnotesize \# constraint viol.},
	xmin=0.0, ymin = 0, xmax = 5, ymax = 130, height =3.5cm,
	x label style={yshift=0.2cm}, y label style={yshift=-0.05cm}, 
	legend pos=north west, legend columns=4, name=plot1,xticklabels={,,}]
	\addplot[black,thick]	table[x = zetas_001, y  = const_state_001]{\file};
    \addplot[black,thick,dashed]	table[x = zetas_005, y  = const_state_005]{\file};
    \addplot[black,very thick,dotted]	table[x = zetas_01, y  = const_state_01]{\file};
    \addplot[blue,thick]	table[x = zeta, y  = state_viol_opt]{\fileopt};
	\legend{$\beta=0.01$, $\beta=0.05$, $\beta=0.1$, $\bm{\pi}^*$};
	\end{axis}
	\begin{axis}
	[xlabel={\footnotesize $\Delta \xi$},ylabel={\footnotesize avg. reward},
	xmin=0.0, ymin = -2, xmax = 5, ymax = 4.4, height =3.5cm,
    y label style={yshift=-0.1cm}, x label style={yshift=0.2cm}, 
	legend pos=south west, at=(plot1.south), anchor=south,
	yshift=-2.0cm]
	\addplot[black,thick]	table[x = zetas_001, y  = rew_001]{\file};
    \addplot[black,thick,dashed]	table[x = zetas_001, y  = rew_005]{\file};
    \addplot[black,very thick,dotted]	table[x = zetas_001, y  = rew_01]{\file};
    \addplot[blue,thick]	table[x = zeta, y  = reward_opt]{\fileopt};
	\end{axis}
	\end{tikzpicture}
    \vspace{-0.35cm}
    \caption{Number of constraint violations and average rewards in dependency on the safety constraint threshold $\xi=521+\Delta \xi$ and the risk-sensitivity $\beta$. Reducing $\beta$ and increasing $\xi$ have a similar effect of admitting more risky behavior in the response inhibition, such that the number of constraint violations and the average reward increase.}
    \label{fig:filter_performance}
\end{figure}

The resulting numbers of constraint violations and the average reward for different values of $\beta$ and $\xi$ are depicted in \cref{fig:filter_performance}. We can observe that 
increasing $\xi$ has exactly the expected effect of loosening the safety constraint by admitting higher velocities $v$, such that the probability of safety decreases and more constraint violations can be observed. At the same time, this allows a higher robot velocity, which in turn causes an increasing average reward. A similar effect can be observed with the risk parameter $\beta$ due to the considered state-independent model uncertainty. When $\beta$ is increased, the conservatism of the safety filter increases. This leads to a lower number of constraint violations, but the average reward also reduces. Therefore, the parameters $\xi$ and $\beta$ exhibit the impact on the probability of safety as discussed in \cref{rem:safety_prob}. Note that the risk-inhibition with the considered soft constraint formulation has a clearly visible effect on the average robot velocity, even when it does not manage to enforce the safety constraints. This can be observed in a comparison with the optimal policy $\bm{\pi}^*(\cdot)$, which achieves a significantly higher reward with a similar number of constraint violations for large values of $\xi$ and small $\beta$. Therefore, the proposed risk-sensitive inhibitory control not only allows to reduce the number of constraint violations, but also the amount by which the constraint is violated.\looseness=-1

\section{Conclusion}\label{sec:conclusion}
Inspired by the psychological concept of inhibitory control, this paper proposes a risk-sensitive method for rendering arbitrary policies safe. This method is based on the introduction of cost functions, such that state constraints can be expressed in terms of value functions. We show that this formulation allows us to employ standard reinforcement learning techniques for obtaining policies that their only goal is to ensure safety. Based on the determined safe policies and corresponding value functions, a risk-sensitive safety constraint is employed to enforce the satisfaction of state constraints online. Thereby, risk-sensitive inhibitory control is realized and its effectiveness is demonstrated in simulations.\looseness=-1



\bibliographystyle{IEEEtran}
\bibliography{IEEEabrv,myBib.bib}

\end{document}